\theoremstyle{plain}
\newtheorem{theorem}{Theorem}[section]
\newtheorem{lemma}[theorem]{Lemma}
\theoremstyle{definition}
\theoremstyle{remark}
\icmltitlerunning{Protecting Language Generation Models via Invisible Watermarking}
\newcommand{\method}{\textsc{Ginsew}\xspace}
\begin{document}

\twocolumn[
\icmltitle{Protecting Language Generation Models via Invisible Watermarking}
\icmlsetsymbol{equal}{*}
\begin{icmlauthorlist}
\icmlauthor{Xuandong Zhao}{yyy}
\icmlauthor{Yu-Xiang Wang}{yyy}
\icmlauthor{Lei Li}{yyy}
\end{icmlauthorlist}

\icmlaffiliation{yyy}{Department of Computer Science, UC Santa Barbara}

\icmlcorrespondingauthor{Xuandong Zhao}{xuandongzhao@cs.ucsb.edu}
\icmlcorrespondingauthor{Yu-Xiang Wang}{yuxiangw@cs.ucsb.edu}
\icmlcorrespondingauthor{Lei Li}{leili@cs.ucsb.edu}


\icmlkeywords{Machine Learning, ICML}
\vskip 0.3in
]



\printAffiliationsAndNotice{}  

\begin{abstract}

Language generation models have been an increasingly powerful enabler for many applications. Many such models offer free or affordable API access, which makes them potentially vulnerable to model extraction attacks through distillation. To protect intellectual property (IP) and ensure fair use of these models, various techniques such as lexical watermarking and synonym replacement have been proposed. However, these methods can be nullified by obvious countermeasures such as ``synonym randomization''. To address this issue, we propose \method, a novel method to protect text generation models from being stolen through distillation. The key idea of our method is to inject secret signals into the probability vector of the decoding steps for each target token. We can then detect the secret message by probing a suspect model to tell if it is distilled from the protected one. Experimental results show that \method can effectively identify instances of IP infringement with minimal impact on the generation quality of protected APIs. Our method demonstrates an absolute improvement of 19 to 29 points on mean average precision (mAP) in detecting suspects compared to previous methods against watermark removal attacks.
\end{abstract}

\section{Introduction}
\label{sec:intro}

Large language models (LLMs) have become increasingly powerful \cite{Brown2020LanguageMA, Ouyang2022TrainingLM}, but their owners are reluctant to open-source them due to high training costs. Most companies provide only API access to their models for free or for a fee to cover innovation and maintenance costs. While many applications benefit from these APIs, some are looking for cheaper alternatives. 

While healthy competition is good for preventing monopoly in the LLM service market, a fairly priced API-service may expose a ``short cut'' that allows a company to distill a comparable model at a much lower cost --- forcing the original model creator out of the market and stifling future innovation. Anticipating this, rational LLM owners will likely never provide API access at a fair price. Instead, they must dramatically increase fees - far beyond service costs - to make model distillation unprofitable. 
\begin{figure}[t]
    \centering
    \includegraphics[width=1.0\linewidth]{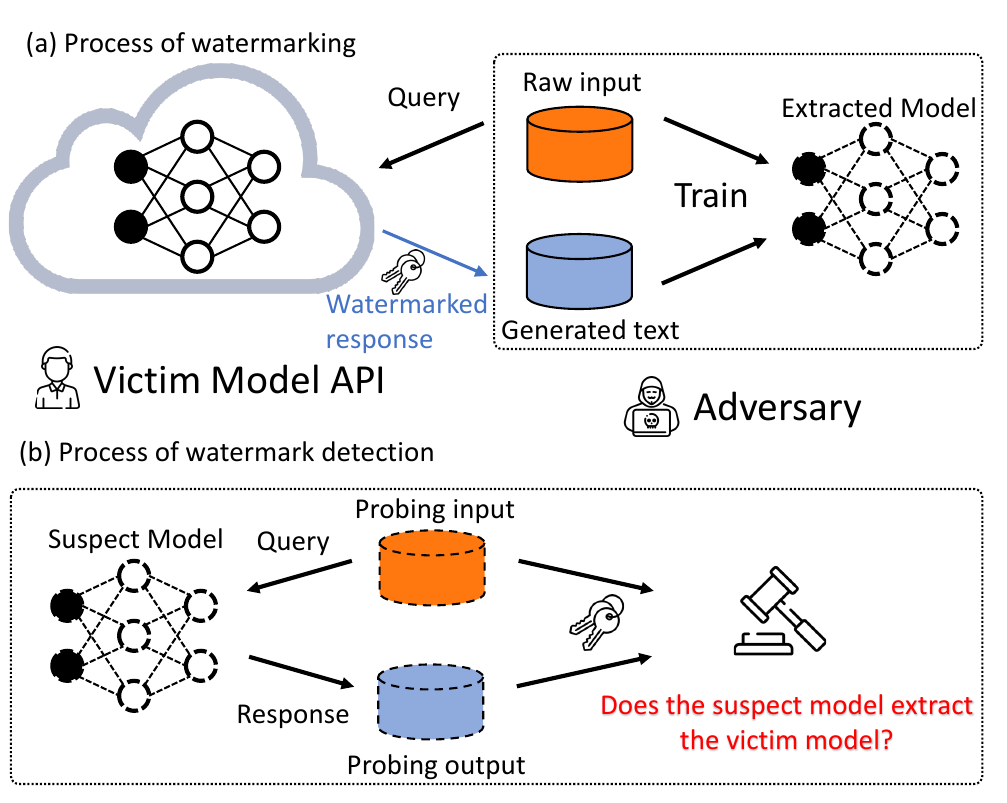}
    \caption{Overview of the process of watermarking and the process of watermark detection. The victim model API embeds watermarks in the response to input queries from the adversaries. The API owner can then use a key to verify if the suspect model has been distilled from the victim model.}
    \label{fig:overview}
\end{figure}

The increased fee might be tolerable for downstream applications with good commercial values, but it makes research and LLM applications for public good unaffordable. To address this, governments could fund public LLMs for research and public service. However, public LLMs could also be distilled for commercial gain or military use by unethical entities.

Recently, a Stanford group \cite{alpaca} claimed to have distilled ChatGPT for just \$600 in API calls, demonstrating the issue’s urgency. The cheaper cost to distill versus retrain models can be justified theoretically \cite{shalev2014understanding}: The sample complexity for training a model from raw data to $\epsilon$-excess risk is on the order of $O(1/\epsilon^2)$; whereas model distillation that uses model-generated labels operates in a ``realizable'' regime, thus requires only $O(1/\epsilon)$ samples.

This brings us to the central question: How can we prevent model-stealing attacks through distillation?

It appears to be a mission impossible. By providing the state-of-the-art LLM service, the information needed to replicate it has been encoded in the provided API outputs themselves. Researchers decided to give up on preventing learnability, but instead, try to watermark the API outputs so models trained using the watermarked outputs can be traced to the original model. Recent works \cite{adi2018turning, zhang2018protecting} use a trigger set to embed invisible watermarks on the neurons of the commercial models before distribution. When model theft is suspected, model owners can conduct an official ownership claim with the aid of the trigger set. Existing works \cite{He2021ProtectingIP, He2022CATERIP} use a synonym replacement strategy to add surface-level watermarks to a text generation model.
However, these methods can be bypassed by an adversary randomly replacing synonyms in the output and removing watermarks from the extracted model, making the protection ineffective.

In this paper, we propose \underline{g}enerative \underline{in}visible \underline{se}quence \underline{w}atermarking (\method), a method to protect text generation models and detect stolen ones. Figure \ref{fig:overview} illustrates the overall process. The core idea is to inject a secret sinusoidal signal into the model's generation probabilities for words. This signal does not harm the model's generation quality. To detect whether a candidate model is stealing from the target model, \method identifies the sinusoidal frequency and compares it with the secret key. \method provides a more robust mechanism for protecting the intellectual property of the model, even under random synonym replacement attacks.

The contributions of this paper are as follows:
\begin{itemize}[itemsep=0pt,leftmargin=1em, topsep=1pt]
    \item We propose \underline{g}enerative \underline{in}visible \underline{se}quence \underline{w}atermarking (\method), a method to protect text generation models against model extraction attacks with invisible watermarks.
    \item We carry out experiments on machine translation and story generation on a variety of models. Experimental results show that our method \method outperforms the previous methods in both generation quality and robustness of IP infringement detection ability. Even with adversarial watermark removal attacks, \method still gains a significant improvement of 19 to 29 points in mean average precision (mAP) of detecting suspects\footnote{Our source code is available at \url{https://github.com/XuandongZhao/Ginsew}.}.
\end{itemize}

\section{Related work}
\label{sec:related}

\paragraph{Model extraction attacks}
Model extraction attacks, also known as model inversion or model stealing, pose a significant threat to the confidentiality of machine learning models \cite{tramer2016stealing, Orekondy2018KnockoffNS, Wallace2020ImitationAA, He2021model}. These attacks aim to imitate the functionality of a black-box victim model by creating or collecting a substitute dataset. The attacker then uses the victim model's APIs to predict labels for the substitute dataset. With this pseudo-labeled dataset,  the attacker can train a high-performance model that mimics the victim model and can even mount it as a cloud service at a lower price. This type of attack is closely related to knowledge distillation (KD) \cite{Hinton2015DistillingTK}, where the attacker acts as the student model that approximates the behavior of the victim model. In this paper, we specifically focus on model extraction attacks in text generation. Previous works \cite{Wallace2020ImitationAA, Xu2021StudentST} have shown that adversaries can use sequence-level knowledge distillation to mimic the functionality of commercial text generation APIs, which poses a severe threat to cloud platforms.
\paragraph{Watermarking}
Watermarking is a technique used to embed unseen labels into signals such as audio, video, or images to identify the owner of the signal's copyright. In the context of machine learning models, some studies \cite{Merrer2017AdversarialFS, adi2018turning, zhang2018protecting} have used watermarks to prevent exact duplication of these models, by inserting them into the parameters of the protected model or constructing backdoor images that activate specific predictions. However, protecting models from model extraction attacks is difficult as the parameters of the suspect model may be different from those of the victim model and the backdoor behavior may not be transferred. 

\paragraph{Watermarking against model extraction}
Recently, several studies have attempted to address the challenge of identifying extracted models that have been distilled from a victim model, with promising results in image classification \cite{Charette2022CosineMW, Jia2020EntangledWA} and text classification \cite{zhao2022distillation}. CosWM \cite{Charette2022CosineMW} embeds a watermark in the form of a cosine signal into the output of the protected model. This watermark is difficult to eliminate, making it an effective way to identify models that have been distilled from the protected model. Nonetheless, CosWM only applies to image classification tasks. As for text generation, \citet{He2021ProtectingIP} propose a lexical watermarking method to identify IP infringement caused by extraction attacks. This method involves selecting a set of words from the training data of the victim model, finding semantically equivalent substitutions for them, and replacing them with the substitutions. Another approach, CATER \cite{He2022CATERIP}, proposes conditional watermarking by replacing synonyms of some words based on linguistic features. However, both methods are surface-level watermarks. The adversary can easily bypass these methods by randomly replacing synonyms in the output, making it difficult to verify by probing the suspect models. \method directly modifies the probability distribution of the output tokens, which makes the watermark invisible and provides a more robust mechanism for identifying extracted models.



\section{Proposed method: \method}
\label{sec:approach}

\begin{algorithm}[t]
   \caption{Watermarking process}
   \label{alg:wm}
\begin{algorithmic}[1]
   \STATE {\bfseries Inputs:} Input text $\boldsymbol{x}$, probability vector $\mathbf{p}$ from the decoder of the victim model, vocab $\mathcal{V}$, group 1 $\mathcal{G}_1$, group 2 $\mathcal{G}_2$, hash function $g(\boldsymbol{x}, \mathbf{v}, \mathbf{M})$. \\
   \STATE {\bfseries Output:} Modified probability vector $\mathbf{p}$
   \STATE Calculate probability summation of tokens in group 1 and group 2:
   $Q_{\mathcal{G}_1} =\sum_{i \in \mathcal{G}_1} \mathbf{p}_i, ~ Q_{\mathcal{G}_2} = \sum_{i \in \mathcal{G}_2} \mathbf{p}_i$
   \STATE Calculate the periodic signal 
   \begin{align*}
       z_{1}(\boldsymbol{x}) & = \cos \left(f_{w} g(\boldsymbol{x}, \mathbf{v}, \mathbf{M})\right), \\
       z_{2}(\boldsymbol{x}) & = \cos \left(f_{w} g(\boldsymbol{x}, \mathbf{v}, \mathbf{M}) + \pi\right)  
   \end{align*}
   \STATE Set $\tilde{Q}_{\mathcal{G}_1} = \frac{Q_{\mathcal{G}_1}+\varepsilon\left(1+z_{1}(\boldsymbol{x})\right)}{1+2 \varepsilon}, \tilde{Q}_{\mathcal{G}_2} = \frac{Q_{\mathcal{G}_2}+\varepsilon\left(1+z_{2}(\boldsymbol{x})\right)}{1+2 \varepsilon}$
   \FOR{$i=1$ {\bfseries to} $|\mathcal{V}|$}
   \STATE{\bfseries if }{$i \in \mathcal{G}_1$ }{\bfseries then }$\mathbf{p}_i \leftarrow \frac{\tilde{Q}_{\mathcal{G}_1}}{Q_{\mathcal{G}_1}}\cdot \mathbf{p}_i$
    \STATE{\bfseries else } $\mathbf{p}_i \leftarrow \frac{\tilde{Q}_{\mathcal{G}_2}}{Q_{\mathcal{G}_2}}\cdot \mathbf{p}_i$
   \ENDFOR
   \STATE {\bfseries return} $\mathbf{p}$
\end{algorithmic}
\end{algorithm}

\subsection{Problem setup}
Our goal is to protect text generation models against model extraction attacks. It enables the victim model owner or a third-party arbitrator to attribute the ownership of a suspect model that is distilled from the victim API. We leverage the secret knowledge that the extracted model learned from the victim model as a signature for attributing ownership. 

In a model extraction attack, the adversary, denoted as $\mathcal{S}$, only has black-box access to the victim model's API, denoted as $\mathcal{V}$. The adversary can query $\mathcal{V}$ using an auxiliary unlabeled dataset, but can only observe the text output and not the underlying probabilities produced by the API. As a result, the adversary receives the output of $\mathcal{V}$ as generation output or pseudo labels. The attacker's goal is to employ sequence-level knowledge distillation to replicate the functionality of $\mathcal{V}$. The model extraction attack is depicted in Figure \ref{fig:overview}(a).

Our method does not aim to prevent model extraction attacks as we cannot prohibit the adversary from mimicking the behavior of a common user. Instead, we focus on verifying whether a suspect model has been trained from the output of the victim model API. If a suspect model distills well, it carries the signature inherited from the victim. Otherwise, a suspect may not carry such a signature, and its generation quality will be noticeably inferior to the victim.

During verification, we assume the presence of a third-party arbitrator (e.g., law enforcement) with white-box access to both the victim and suspect models, as well as a probing dataset. The arbitrator or model owner compares the output of the suspect model to the secret signal in the watermark with a key: if the output matches the watermark, the suspect model is verified as a stolen model.

\subsection{Generative invisible sequence watermarking}
\method dynamically injects a watermark in response to queries made by an API’s end-user. It is invisible because it is not added to the surface text. We provide an overview of \method in Figure \ref{fig:overview}, consisting of two stages: i) watermarking stage and ii) watermark detection stage.


\method protects text generation models from model extraction attacks. We achieve this by carefully manipulating the probability distribution of each token generated by the model, specifically by modifying the probability vector during the decoding process. This approach allows for a unique signature to be embedded within the generated text, making it easily identifiable by the owner or a third-party arbitrator, while still maintaining the coherence and fluency of the text. The process of \method is illustrated in Figure \ref{fig:idea}.

\begin{figure}[t]
    \centering
    \includegraphics[width=1.0\linewidth]{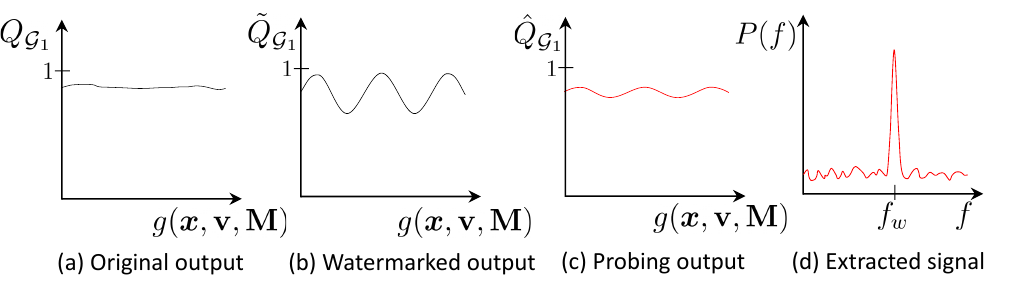}
    \caption{The process of \method. (a) The original group probability of the victim model is represented by  $Q_{\mathcal{G}_1}$. (b) The API owner applies a sinusoidal perturbation to the predicted group probability, resulting in a watermarked output, denoted as  $\tilde{Q}_{\mathcal{G}_1}$. (c) If the adversary attempts to distill the victim model, the extracted model will convey this periodical signal. (d) After applying a Fourier transform to the output with a specific key, a peak in the frequency domain at frequency $f_w$ can be observed.}
    \label{fig:idea}
\end{figure}

We present the watermarking process in Algorithm \ref{alg:wm}. For each token $v$ in the whole vocabulary $\mathcal{V}$, we randomly assign it to two distinct groups, group 1 $\mathcal{G}_1$ and group 2 $\mathcal{G}_2$, so that each group contains $\frac{|\mathcal{V}|}{2}$ words. $|\mathcal{V}|$ represents the vocabulary size. For each input text $\boldsymbol{x}$, we use a hash function, referred to as $g(\cdot)$, to project it into a scalar.
This hash function takes three inputs: input text $\boldsymbol{x}$, phase vector $\mathbf{v} \in \mathbb{R}^n$ and token matrix $\mathbf{M}\in \mathbb{R}^{|\mathcal{V}|\times n}$. Note that each token corresponds to a row in the matrix. The elements of the phase vector $\mathbf{v}$ are randomly sampled from a uniform distribution $[0, 1)$, while the elements of the token matrix $\mathbf{M}$ are randomly sampled from a standard normal distribution $\mathbf{M}_{ij} \sim \mathcal{N}(0, 1)$. Let $\mathbf{M}_{i} \in \mathbb{R}^{n}$ denote the $i$-th row of matrix $\mathbf{M}$, $\mathbf{v}^\top\mathbf{M}_{i} \sim \mathcal{N}(0, \frac{n}{3})$ (See proof in Appendix \ref{lemma1}). We then use the probability integral transformation $F$ to obtain a uniform distribution of the hash values: $g(\boldsymbol{x}, \mathbf{v}, \mathbf{M}) = F(\mathbf{v}^\top\mathbf{M}_{\text{tok}(\boldsymbol{x})}) \sim \mathcal{U}(0, 1)$, where $\text{tok}(\boldsymbol{x})$ denotes the ID of the second token in the input text. 

We design a periodic signal function based on the input. The hash function $g(\boldsymbol{x}, \mathbf{v}, \mathbf{M})$ returns a scalar determining the amount of noise to be added to the output.
As shown in Algorithm \ref{alg:wm}, given a probability vector $\mathbf{p} \in [0,1]^{|\mathcal{V}|}$, we calculate the total probability of group 1, $Q_{\mathcal{G}_1} =\sum_{i \in \mathcal{G}_1} \mathbf{p}_i$, and the total probability of group 2, $Q_{\mathcal{G}_2} = \sum_{i \in \mathcal{G}_2} \mathbf{p}_i$.
Following \citet{Charette2022CosineMW}, we calculate the periodic signal function, where $f_w \in \mathbb{R}$ is the angular frequency.
   \begin{align}
       z_{1}(\boldsymbol{x}) & = \cos \left(f_{w} g(\boldsymbol{x}, \mathbf{v}, \mathbf{M})\right), \\
       z_{2}(\boldsymbol{x}) & = \cos \left(f_{w} g(\boldsymbol{x}, \mathbf{v}, \mathbf{M})+ \pi \right)  
   \end{align}
We can obtain a watermark key by combining a set of variables $K = (f_w, \mathbf{v}, \mathbf{M})$. Note that $z_{1}(\boldsymbol{x}) + z_{2}(\boldsymbol{x}) = 0$. Next, we compute the periodic signal for the modified group probability 
\begin{align}
    \tilde{Q}_{\mathcal{G}_1} &= \frac{Q_{\mathcal{G}_1}+\varepsilon\left(1+z_{1}(\boldsymbol{x})\right)}{1+2 \varepsilon}, \label{eq:mod1} \\
    \tilde{Q}_{\mathcal{G}_2} &= \frac{Q_{\mathcal{G}_2}+\varepsilon\left(1+z_{2}(\boldsymbol{x})\right)}{1+2 \varepsilon} \label{eq:mod2}
\end{align} We prove that $0 \leq Q_{\mathcal{G}_1},Q_{\mathcal{G}_2} \leq 1 $ and $Q_{\mathcal{G}_1}+Q_{\mathcal{G}_2} = 1$ (In Appendix \ref{lemma2}). $\varepsilon$ is the watermark level, measuring how much noise is added to the group probability. Then we change the elements in the probability vector such that they align with the perturbed group probability. In this way, we are able to embed a hidden sinusoidal signal into the probability vector of the victim model. We can use decoding methods like beam search, top-$k$ sampling, or other methods to generate the output with the new probability vector.

\begin{algorithm}[t]
   \caption{Watermark detection}
   \label{alg:ext_wm}
\begin{algorithmic}[1]
   \STATE {\bfseries Inputs:} Suspect model $\mathcal{S}$, sample probing data $\mathcal{D}$ from the training data of $\mathcal{S}$, vocab $\mathcal{V}$, group 1 $\mathcal{G}_1$, group 2 $\mathcal{G}_2$, hash function $g(\boldsymbol{x}, \mathbf{v}, \mathbf{M})$, filtering threshold value $q_{\min}$. \\
   \STATE {\bfseries Output:} Signal strength
   \STATE Initialize $\mathcal{H} = \emptyset$
   \FOR{{\bfseries each} input $\boldsymbol{x} \text{ in } \mathcal{D}$ }
   \STATE $t = g(\mathbf{v}, \boldsymbol{x}, \mathbf{M})$
   \FOR{{\bfseries each} decoding step of $\mathcal{S}(\boldsymbol{x})$}
   \STATE Get probability vector $\hat{\mathbf{p}}$ from the decoder of the suspect model.
   \STATE $\hat{Q}_{\mathcal{G}_1} =\sum_{i \in \mathcal{G}_1}\hat{\mathbf{p}}_i$
   \STATE $\mathcal{H} \leftarrow \mathcal{H} \cup (t,\hat{Q}_{\mathcal{G}_1}) $
   \ENDFOR
   \ENDFOR
   \STATE Filter out elements in $\mathcal{H}$ where $\hat{Q}_{\mathcal{G}_1}\leq q_{\min}$, remaining pairs form the set $\widetilde{\mathcal{H}}$.
   \STATE Compute the Lomb-Scargle periodogram from the pairs $(t^{(k)},{\hat{Q}_{\mathcal{G}_1}}^{(k)}) \in \widetilde{\mathcal{H}}$
    \STATE Compute $P_{\text {snr}}$ in Equation \ref{eq:psnr}.
   \STATE {\bfseries return} $P_{\text {snr}}$
\end{algorithmic}
\end{algorithm}

\begin{algorithm}[htbp]
   \caption{Watermark detection with text alone}
   \label{alg:ext_wm2}
\begin{algorithmic}[1]
   \STATE {\bfseries Inputs:} Suspect model $\mathcal{S}$, sample probing data $\mathcal{D}$ from the training data of $\mathcal{S}$, vocab $\mathcal{V}$, group 1 $\mathcal{G}_1$, group 2 $\mathcal{G}_2$, hash function $g(\boldsymbol{x}, \mathbf{v}, \mathbf{M})$.\\
   \STATE {\bfseries Output:} Signal strength
   \STATE Initialize $\mathcal{H} = \emptyset$
   \FOR{{\bfseries each} input $\boldsymbol{x} \text{ in } \mathcal{D}$ }
   \STATE $t = g(\mathbf{v}, \boldsymbol{x}, \mathbf{M})$
   \STATE $\boldsymbol{y} \leftarrow \mathcal{S}(\boldsymbol{x}) $
   \FOR{{\bfseries each} token of $\boldsymbol{y}$}
   \STATE $\mathcal{H} \leftarrow \mathcal{H} \cup \left(t, \mathbf{1}(\boldsymbol{y}_i \in \mathcal{G}_1)\right) $
   \ENDFOR
   \ENDFOR
   \STATE Compute the Lomb-Scargle periodogram from $\mathcal{H}$, and compute $P_{\text {snr}}$ in Equation \ref{eq:psnr}.
   \STATE {\bfseries return} $P_{\text {snr}}$
\end{algorithmic}
\end{algorithm}

\subsection{Detecting watermark from suspect models}
In order to detect instances of IP infringement, we first create a probing dataset $\mathcal{D}$ to extract watermarked signals in the probability vector of the decoding steps. $\mathcal{D}$ can be obtained from the training data of the extracted model, as the owner has the ability to store any query sent by a specific end-user. Additionally, $\mathcal{D}$ can also be drawn from other distributions as needed.

The process of detecting the watermark from suspect models is outlined in Algorithm \ref{alg:ext_wm}. For each probing text input, the method first acquires the hash value $t = g(\boldsymbol{x}, \mathbf{v}, \mathbf{M})$. Next, for each decoding step of the suspect model, we add the pair $(t,\hat{Q}_{\mathcal{G}_1})$ to the set $\mathcal{H}$. To eliminate outputs with low confidence, we filter out the pairs with $\hat{Q}_{\mathcal{G}_1} \leq q_{\min}$, where the threshold value $q_{\min}$ is a constant parameter of the extraction process. The Lomb-Scargle periodogram \cite{Scargle1982StudiesIA} method is then used to estimate the Fourier power spectrum $P(f)$, at a specific frequency, $f_w$, in the probing set $\mathcal{H}$. By applying approximate Fourier transformation, we amplify the subtle perturbation in the probability vector. So that we can detect a peak in the power spectrum at the frequency $f_w$. This allows for the evaluation of the strength of the signal, by calculating the signal-to-noise ratio $P_{\text {snr}}$
\begin{gather}
P_{\text {signal }}=\frac{1}{\delta} \int_{f_{w}-\frac{\delta}{2}}^{f_{w}+\frac{\delta}{2}} P(f) df \notag \\ 
\text{\footnotesize $P_{\text {noise }}=\frac{1}{F-\delta}\left[\int_{0}^{f_{w}-\frac{\delta}{2}} P(f) d f+\int_{f_{w}+\frac{\delta}{2}}^{F} P(f) d f\right]$} \notag \\
P_{\text {snr}}=P_{\text {signal }} / P_{\text {noise }},
\label{eq:psnr}
\end{gather}
where $\delta$ controls the window width of $\left[f_{w}-\frac{\delta}{2}, f_{w}+\frac{\delta}{2}\right]$; $F$ is the maximum frequency, and $f_{w}$ is the angular frequency embedded into the victim model. A higher $P_{\text {snr}}$ indicates a higher peak in the frequency domain, and therefore a higher likelihood of the presence of the secret signal in the suspect model, confirming it distills the victim model. 

It is worth noting that the threshold and frequency parameters used in the Lomb-Scargle periodogram method can be adjusted to optimize the performance of the proposed method. The specific settings used in our experiments will be discussed in the Experiments section of the paper. Besides, we present Algorithm \ref{alg:ext_wm2} demonstrating that we can detect watermarks by analyzing just the generated text itself, without relying on the model's predicted probabilities. A more detailed discussion of this text-only approach can be found in Section \ref{sec:wm2}.

\section{Experiments}
\label{sec:exps}
\begin{table*}[t]
\centering

\setlength{\tabcolsep}{2pt} 
\caption{Main results for model performance and detection. We report the generation quality and mean average precision of the model infringement detection (Detect mAP $\times 100$). We use F1 scores of ROUGE-L and BERTScore.}
\resizebox{\textwidth}{!}{
\begin{tabular}{l|ccc|ccc|ccc}
\toprule
 & \multicolumn{3}{|c|}{\textbf{IWSLT14}}  & \multicolumn{3}{|c|}{\textbf{WMT14}} & \multicolumn{3}{c}{\textbf{ROCStories}} \\
 & BLEU $\uparrow$ & BERTScore $\uparrow$ & Detect mAP $\uparrow$ & BLEU $\uparrow$ & BERTScore $\uparrow$ & Detect mAP $\uparrow$ & ROUGE-L $\uparrow$ & BERTScore $\uparrow$ & Detect mAP $\uparrow$ \\
\midrule
Original models & 34.6 & 94.2 & - & 30.8 & 65.7 & - & 16.5 & 90.1 & - \\
\midrule
\multicolumn{10}{l}{Plain watermark} \\
~~~~\citet{He2021ProtectingIP} & 33.9 & 92.7 & 100 & 30.5 & 65.3 & 100 & 15.8 & 89.3 & 100 \\
~~~~CATER \cite{He2022CATERIP} & 33.8 & 92.5 & 76.4 & 30.5 & 65.4 & 78.3 & 15.6 & 89.1 & 83.2 \\
~~~~\method & 34.2 & 93.8 & 100 & 30.6 & 65.5 & 100 & 16.1 & 89.6 & 100 \\
\midrule
\multicolumn{10}{l}{Watermark removed by synonym randomization} \\
~~~~\citet{He2021ProtectingIP} & 32.7 & 90.7 & 63.1 & 29.6 & 64.7 & 62.3 & 14.8 & 88.4 & 59.6 \\
~~~~CATER \cite{He2022CATERIP} & 32.7 & 90.6 & 68.5 & 29.5 & 64.7 & 63.1 & 14.9 & 88.4 & 64.2 \\
~~~~\method & 33.1 & 90.9 & 87.7 & 29.8 & 64.9 & 86.9 & 15.1 & 89.0 & 93.2 \\
\bottomrule
\end{tabular}
}
\label{table:main}
\end{table*}

We evaluate the performance of \method on two common text generation tasks: machine translation and story generation. There are multiple public APIs available for these models\footnote{https://translate.google.com/}\footnote{https://beta.openai.com/overview}.

\paragraph{Machine translation}
In the machine translation task, we utilize the IWSLT14 and WMT14 datasets \cite{Cettolo2014ReportOT, Bojar2014FindingsOT}, specifically focusing on German (De) to English (En) translations. We evaluate the quality of the translations based on BLEU \cite{Papineni2002BleuAM} and BERTScore \cite{Zhang2019BERTScoreET} metrics. We adopt the official split of train/valid/test sets. BLEU focuses on lexical similarity by comparing n-grams, while BERTScore focuses on semantic equivalence through contextual embeddings. For IWSLT14, a vocabulary consisting of 7,000 BPE \cite{Sennrich2016NeuralMT} units is used, whereas WMT14 employs 32,000 BPE units.

\paragraph{Story generation}
For the story generation task, we use the ROCstories \cite{Mostafazadeh2016ACA} corpus. Each story in this dataset comprises 5 sentences, with the first 4 sentences serving as the context for the story and the input to the model, and the 5th sentence being the ending of the story to be predicted. There are 90,000 samples in the train set, and 4081 samples in the validation and test sets. The generation quality is evaluated based on ROUGE \cite{Lin2004ROUGEAP} and BERTScore metrics. A vocabulary of 25,000 BPE units is used in this task.

\paragraph{Baselines} 
We compare \method with \citet{He2021ProtectingIP} and CATER \cite{He2022CATERIP}. Specifically, \citet{He2021ProtectingIP} propose two watermarking approaches: the first one replaces all the watermarked words with their synonyms; the second one watermarks the victim API outputs by mixing American and British spelling systems. Because the second one is easily eliminated by the adversary through consistently using one spelling system, we focus on their first approach. This method selects a set of words $\mathcal{C}$ from the training data of the victim model. Then for each word $c\in\mathcal{C}$, it finds synonyms for $c$ and forms a set $\mathcal{R}$. Finally, the original words of $\mathcal{C}$ and their substitutions $\mathcal{R}$ are replaced with watermarking words $\mathcal{W}$. As an improvement of \citet{He2021ProtectingIP}, CATER proposes a conditional watermarking framework, which replaces the words with synonyms based on linguistic features. The hit ratio is used as the score for the baselines to detect IP infringement
\begin{equation}\label{eq:hit}
\text{hit}=\frac{\#\left(\mathcal{W}_y\right)}{\#\left(\mathcal{C}_y \cup \mathcal{R}_y\right)}
\end{equation}
where $\#\left(\mathcal{W}_y\right)$ represents the number of watermarked words $\mathcal{W}$ appearing in the suspect model's output $\boldsymbol{y}$, and $\#\left(\mathcal{C}_y \cup \mathcal{R}_y\right)$ is the total number of $\mathcal{C}_y \text{ and } \mathcal{R}_y$ found in word sequence $\boldsymbol{y}$. We reproduce the watermarking methods with the synonym size of 2 in \citet{He2021ProtectingIP} and CATER. For CATER, we use the first-order Part-of-Speech (POS) as the linguistic feature.

\paragraph{Evaluation} 
In order to evaluate the performance of \method against the baselines in detecting extracted models, we reduce the binary classification problem into a thresholding task by using a specific test score. Since \method and baselines use different scores to detect the extracted model, we set up a series of ranking tasks to show the effect of these scores. For each task, we train a Transformer \cite{vaswani2017attention} base model as the victim model. Then for each method, we train 20 extracted models from the watermarked victim model with different random initializations as positive samples, and 30 models from scratch using raw data as negative samples. For \method, we use the watermark signal strength values $P_{\text{snr}}$ (Equation \ref{eq:psnr}) as the score for ranking, while for \citet{He2021ProtectingIP} and CATER, we use the hit ratio (Equation \ref{eq:hit}) as the score. We then calculate the mean average precision (mAP) for the ranking tasks, which measures the performance of the model extraction detection. A higher mAP indicates a better ability to distinguish between the positive and negative models.

\paragraph{Experiment setup}
For each task, we train the protected models to achieve the best performance on the validation set. As shown in Table \ref{table:main}, we train three victim models without watermark on IWSLT14, WMT14 and ROCStories datasets. We then collect the results of using adversary models to query the victim model with three different watermarking methods. By default, we use beam search as the decoding method (beam size $= 5$). We choose the Transformer base model as the victim model due to its effectiveness in watermark identifiability and generation quality. The implementation of our experiments is based on fairseq \cite{ott2019fairseq}. We use the Adam optimizer \cite{Kingma2014AdamAM} with
$\beta = (0.9, 0.98)$ and set the learning rate to 0.0005. Additionally, we incorporate 4,000 warm-up steps. The learning rate then decreases proportionally to the inverse square root of the step number. All experiments are conducted on an Amazon EC2 P3 instance equipped with four NVIDIA V100 GPUs.

\subsection{Main Results}
The results of the watermark identifiability and generation quality for the text generation tasks studied in this paper are presented in Table \ref{table:main}. We also show examples of watermarked text in Table \ref{tab:wm_examples}. Both our method \method and \citet{He2021ProtectingIP} achieve a 1.00 mean average precision (mAP) on the watermark detection, indicating the effectiveness of detecting IP infringements. Nevertheless, \method has better BLEU and ROUGE-L scores, reflecting the better generation quality of our method. Note that when mAP reaches 1.00, the false positive rates become 0. This implies that by selecting an appropriate threshold (empirically set as $P_{\text{snr}} > 5.0$), the signal of unwatermarked models does not exceed this threshold. While \citet{He2021ProtectingIP} has perfect detection, we argue that the watermarks of extracted models in \citet{He2021ProtectingIP} can be easily erased as the synonym replacement techniques are not invisible. It's worth mentioning that \method sees a negligible degradation in metrics such as BLEU, ROUGE, and BERTScore, when compared to the non-watermarked baseline. CATER utilizes conditional watermarks, which inevitably leads to non-obvious watermark signals in the extracted model's output. 

\subsection{Watermark removal attacks} 
To evaluate the effectiveness and robustness of our proposed method, we conduct a synonym randomization attack, also known as a watermark removal attack. Both \citet{He2021ProtectingIP} and CATER \cite{He2022CATERIP} use synonym replacement as a basis for their watermarking methods. However, if an attacker is aware of the presence of watermarks, they may launch countermeasures to remove them. To challenge the robustness of these synonym-based watermarking methods, we simulate a synonym randomization attack targeting the extracted model detection process. We use WordNet \cite{Fellbaum2000WordNetA} to find synonyms of a word and create a list of word sets. These word sets may include words that are not semantically equivalent, so we use a pre-trained Word2Vec \cite{Mikolov2013EfficientEO} model to filter out sets with dissimilar words. Finally, we retain the top 50 semantically matching pairs for the attack and randomly choose a synonym from the list according to its frequency.


In such a watermark removal attack, \citet{He2021ProtectingIP} and CATER fail to detect the adversary whereas \method can perform much better in terms of mAP. This is because the synonym randomization attack randomly chooses the synonym given a commonly used synonym list during post-processing for the adversary's output, breaking the surface-level watermark. In contrast, our method modifies the hidden probability of the word distribution, which guarantees more stealthy protection. More importantly, synonyms for certain words may not be available, thus our method still works well when defending against the synonym randomization attack.

\subsection{Case study}
\begin{figure}[htbp]
    \centering
    \includegraphics[width=1.0\linewidth]{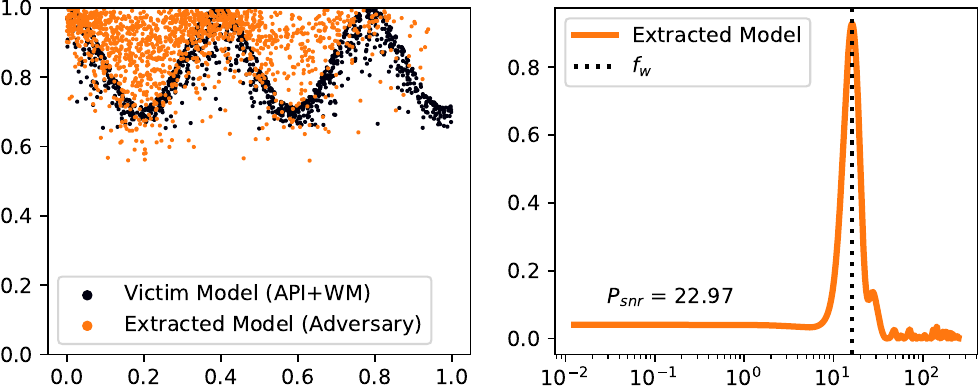}
    \caption{A positive example of \method. There is a significant peak in the power spectrum at frequency $f_w$.}
    \label{fig:pos}
\end{figure}
We conduct a case study as an example to demonstrate the watermarking mechanism in \method. We use the IWSLT14 machine translation dataset to train the victim model, where we set the watermark level $\varepsilon = 0.2$ and the angular frequency $f_w = 16.0$ for the victim model. To test the effectiveness of \method, two different extracted models are created: a positive one and a negative one. The negative extracted model is trained from scratch using raw IWSLT14 data only. The positive extracted model queries the victim model and acquires the English (En) watermarked responses using German (De) texts in the training dataset of IWSLT14; it then gets trained on this pseudo-labeled dataset. 

\begin{figure}[htbp]
    \centering
    \includegraphics[width=1.0\linewidth]{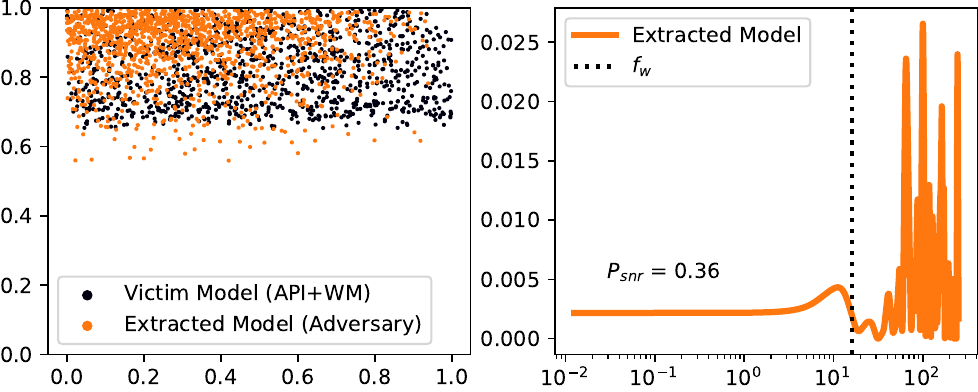}
    \caption{Use a wrong key to build the hash function for the positive example.}
    \label{fig:wrong}
\end{figure}
In the watermark detection process, we set the threshold $q_{\min} = 0.6$ and treat a subset of German inputs as the probing dataset. We use the watermark key $K$ to extract the output of the extracted model following Algorithm \ref{alg:ext_wm}, which enables us to analyze the group probability of positive and negative extracted models in both the time and frequency domains. 

\begin{figure}[htbp]
    \centering
    \includegraphics[width=1.0\linewidth]{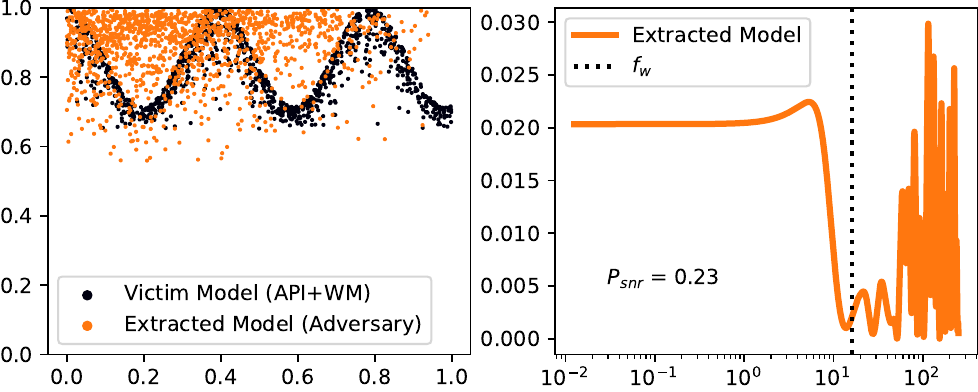}
    \caption{A negative example of \method. There is no peak in the frequency domain.}
    \label{fig:neg}
\end{figure}

As shown in Figure \ref{fig:pos}, the output of the watermarked victim model follows an almost perfect sinusoidal function and that of the positive extracted model has a similar trend in the time domain. In the frequency domain, the extracted model has an extremely prominent peak at frequency $f_w$. The $P_{\text{snr}}$ score exceeds 20 for the positive example, indicating a high level of watermark detection. Furthermore, Figure \ref{fig:wrong} illustrates that without the correct watermark key, the adversary is unable to discern whether the victim model API has embedded a watermark signal or not, as the hash function can be completely random. In this sense, \method achieves stealthy protection of the victim model.

In Figure \ref{fig:neg}, we show the negative example that a suspect model trained from scratch (without watermarked data). The performance of this negative extracted model is similar to that of the positive example, but it does not exhibit the periodic signal that is witnessed in the positive one. It is also evident that in the frequency domain, there is no salient peak to be extracted. Moreover, the $P_{\text{snr}}$ score for the negative example is close to 0, which echoes the lack of a clear signal.

\section{Ablation Study}
\label{sec:addexps}

\subsection{Watermark detection with different architectures}
\begin{table}[htbp]
\centering
\setlength{\tabcolsep}{4pt} 
\caption{Watermark detection with different model architectures. We choose four different architectures for the extracted model and report the generation quality and detection performance (mAP $\times 100$).}
\begin{tabular}{l|cc|cc}
\toprule
 & \multicolumn{2}{|c|}{\textbf{IWSLT14}}  & \multicolumn{2}{c}{\textbf{ROCStories}} \\
 & BLEU & mAP & ROUGE-L & mAP \\
\midrule
(m)BART & 34.0 & 100 & 16.0 & 100 \\
Transformer 6-6 & 34.2 & 100 & 16.1 & 100 \\
Transformer 4-4 & 33.9 & 100 & 16.0 & 100 \\
Transformer 2-2 & 33.2 & 64.4 & 15.5 & 53.7 \\
ConvS2S & 33.7 & 84.2 & 15.8 & 92.1 \\
\bottomrule
\end{tabular}
\label{table:diff_model}
\end{table}
\method is designed to be independent of the model architectures, in other words, it can effectively protect against model extraction attacks regardless of the architectures of the extracted model. To demonstrate this, we conduct experiments using different model architectures for the extracted models. We report the results in Table \ref{table:diff_model}. Adversaries are often unaware of the architecture of remote APIs, but recent studies have shown that model extraction attacks can still be successful when there is a mismatch between the architecture of the victim model and that of the extracted model (e.g., \cite{Wallace2020ImitationAA, He2021model}). Consequently, to test in this setting, we impose different architectures for the adversary models in the experiments. We use Transformer models with varying numbers of encoder and decoder layers, as well as a ConvS2S \cite{Gehring2017ConvolutionalST} model as the adversaries. Additionally, considering real-world scenarios where adversaries may start with pre-trained models to distill the victim model, we also include pre-trained models such as BART \cite{Lewis2019BARTDS} and mBART \cite{Liu2020MultilingualDP} as adversaries. The adversaries are trained/fine-tuned based on the 

The experiment results, summarized in Table \ref{table:diff_model}, indicate that (m)BART and Transformer models with either 6 encoder-decoder layers or 4 encoder-decoder layers achieve perfect detection performance for both datasets; the 2 encoder-decoder Transformer model and the ConvS2S model show worse detection performance. We conjecture that this is due to the fact that the latter models have fewer parameters, which makes it difficult for them to learn the hidden signal in the output of the victim model. Overall, the results suggest that our approach can effectively protect against model extraction attacks, regardless of the architecture of the extracted model.

\subsection{Watermark detection with text alone}\label{sec:wm2} 
In Algorithm \ref{alg:ext_wm}, we introduce a method for detecting watermarks using text probabilities $(t^{(k)},{\hat{Q}_{\mathcal{G}_1}}^{(k)}) \in \widetilde{\mathcal{H}}$. Here, we explore the possibility of detecting watermarks by analyzing just the generated text itself. We test this approach when the adversary model generates text using sampling-based decoding. We collect pairs $(t^{(k)}, \mathbf{1}(\boldsymbol{y}_i^{(k)} \in \mathcal{G}_1))$ representing whether a given token belongs to group 1, $\mathcal{G}_1$, or not. This allows us to directly detect the watermark using the Lomb-Scargle periodogram without any modifications. 

To conduct our experiments on IWSLT14 datasets, we employ the same adversary model (transformer 6-6) as shown in Table \ref{table:main} and probe it to obtain text outputs. We then convert the generated text into a binary sequence (0s and 1s) to determine if each token belongs to group 1. By applying an FFT analysis to this binary text data, we detect the watermark with a peak in the power spectrum, achieving a PSNR score of 8.35. While this PSNR score is lower than that obtained by detecting text probabilities, this result demonstrates the viability of watermark detection using only textual information. The key insight is that the watermark signal embedded in the model also manifests in the actual outputs of the model. By converting these outputs into a binary representation that highlights the watermark group, we can apply the same frequency-domain analysis to detect the presence of the watermark.

\subsection{The impact of watermark level}\label{sec:wm_level}
\begin{figure}[htbp]
    \centering
    \includegraphics[width=1.0\linewidth]{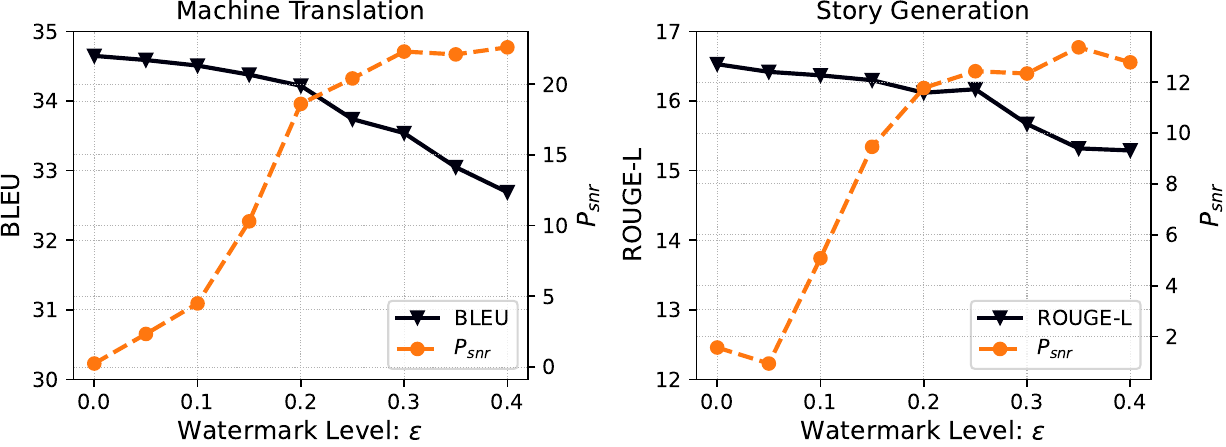}
    \caption{Generation quality for the victim model and detected $P_{\text{snr}}$ score with different watermark levels on machine translation and story generation tasks.}
    \label{fig:wm_noise}
\end{figure}
The watermark level ($\varepsilon$) plays a crucial role in determining the effectiveness of the watermarking technique. The watermark level refers to the degree of perturbation added to the output of the victim model. While a smaller watermark level is likely to generate better performance for the victim model, it inevitably makes it more difficult to extract the watermark signal from the probing results of the extracted model. 

As depicted in Figure \ref{fig:wm_noise}, we observe that increasing the watermark level results in a decrease in generation quality on both machine translation and story generation tasks. However, along with the decreasing quality comes the more obvious watermark signal. When the watermark level is low ($\varepsilon < 0.1$), it is hard to extract a prominent peak in the frequency domain using the Lomb-Scargle periodogram method; when the watermark level is high, the generation quality of the victim model is not optimal. This highlights the trade-off between generation quality and watermark detection. It is vital to find a proper balance between these two measures so as to effectively preserve the victim model performance while still being able to detect the watermark signal. Taking this into account, in the experiments presented in Table \ref{table:main} we select a watermark level of $\varepsilon = 0.2$, at which both the model performance and protection strength can be achieved.

\subsection{Mixture of raw and watermarked data}\label{sec:mix}
\begin{figure}[htbp]
    \centering
    \includegraphics[width=1.0\linewidth]{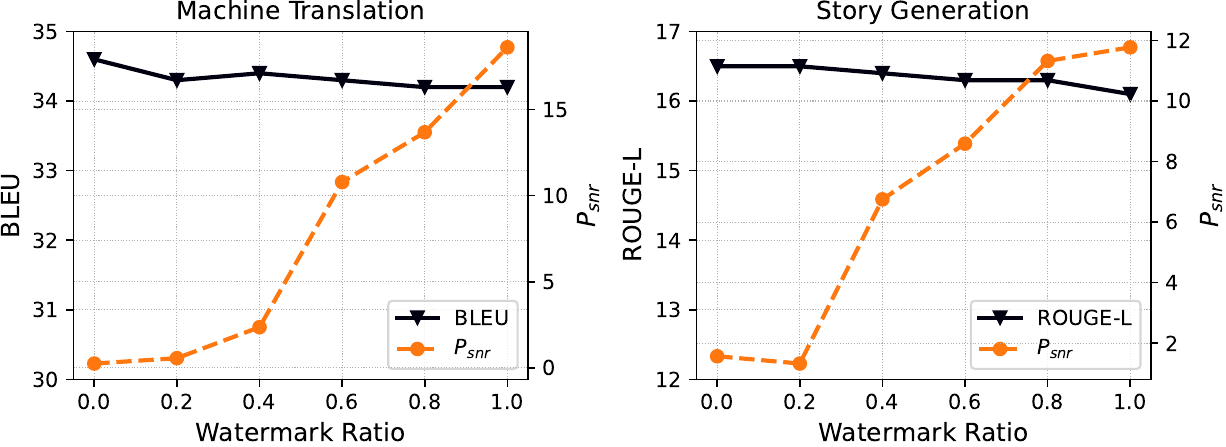}
    \caption{Impact of watermarked data ratio on generation quality and $P_{\text{snr}}$ for the extracted model on machine translation and story generation tasks.}
    \label{fig:wm_ratio}
\end{figure}
In the real world, adversaries often attempt to circumvent detection by the owner of the intellectual property (IP) they are infringing upon.  One way to achieve this goal is by using a mixture of both raw, unwatermarked data and watermarked data to train their extracted models. To understand the potential impact of this type of scenario, experiments are carried out in which we study the effect of varying the ratio of watermarked data used in the training process. The experiment results in Figure 
 \ref{fig:wm_ratio} reveal that the extracted signal $P_{\text{snr}}$ increases as the ratio of watermarked data rises. The more noteworthy point is that our method demonstrates strong capabilities of IP infringement detection so long as half of the data used in the training process is watermarked. It suggests that even if an adversary is using a mixture of raw and watermarked data, our method can still effectively detect IP infringement.

\subsection{Different decoding methods}
\begin{table}[htbp]
\centering
\setlength{\tabcolsep}{4pt} 
\caption{Watermark detection with different decoding methods. \method can successfully detect the watermark signal in three decoding methods.}
\begin{tabular}{l|cc|cc}
\toprule
 & \multicolumn{2}{|c|}{\textbf{IWSLT14}}  & \multicolumn{2}{c}{\textbf{ROCStories}} \\
 & BLEU & $P_{\text{snr}}$ & ROUGE-L & $P_{\text{snr}}$ \\
\midrule
Beam-5 & 34.2 & 18.3 & 16.1 & 11.4 \\
Beam-4 & 34.1 & 19.4 & 16.1 & 12.2 \\
Top-5 Sampling & 31.5 & 23.3 & 13.4 & 13.8 \\
\bottomrule
\end{tabular}
\label{table:diff_de}
\end{table}
Our proposed method modifies the probability vector by embedding a watermark signal. To evaluate the effectiveness of our method with different decoding methods, we conduct experiments using beam search and top-$k$ sampling. Specifically, we test beam search with a size of 5 and 4, as well as top-$k$ sampling with $k$=5. We measure both the generation quality and the strength of the watermark signal, the results of which are displayed in Table \ref{table:diff_de}. Beam search yields better generation quality, and as a consequence, we use beam search with a size of 5 as the default decoding method in our experiments in Table \ref{table:main}. Nonetheless, results in Table \ref{table:diff_de} validate that our method, in general, remains robust and effective when different decoding methods are employed. For all the three decoding methods tested, our method can successfully detect a prominent signal in the frequency domain, further corroborating the robustness of \method.

\subsection{How does watermark signal change in the distillation process?}
\begin{figure}[htbp]
    \centering
    \includegraphics[width=1.0\linewidth]{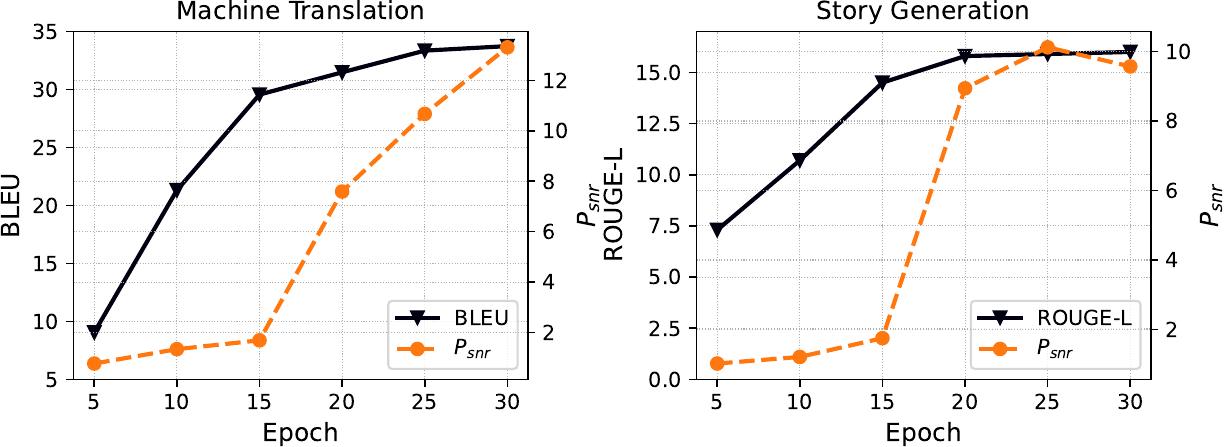}
    \caption{Performance and detected $P_{\text{snr}}$ score of the extracted model with different epochs on machine translation and story generation tasks.}
    \label{fig:wm_epoch}
\end{figure}
To study the impact of the distillation process on the watermark signal, we conduct an experiment to explore how the generation quality and $P_{\text{snr}}$ scores change with the number of training epochs. As shown in Figure \ref{fig:wm_epoch}, the quality of the generated text improves as the number of training epochs increases. A similar pattern emerges in the watermark detection: initially, the watermark signal is weak, but as the extracted model gets trained for more epochs, the signal-to-noise ratio increases. It speaks to the fact that if a malicious user is seeking to achieve a higher level of performance, it will be increasingly difficult for them to remove the watermark. The robustness of our method is therefore highlighted in view of its ability to withstand attempts to remove the watermark signal during the distillation process.

\section{Conclusion}
\label{sec:conclusion}
In this paper, we propose \method to generate an invisible sequence watermark for protecting language generation models from model extraction attacks. 
Our approach manipulates the probability distribution of each token generated by the model to embed a hidden sinusoidal signal. 
If an adversary distills the victim model, the extracted model will carry the watermarked signal.
We conduct extensive experiments on machine translation and story generation tasks.
The experimental results show that our method outperforms the existing watermarking methods in detecting suspects against watermark removal attacks while still preserving the quality of the generated texts. 
Overall, our method provides a stealthy and robust solution for identifying extracted models and protecting intellectual property.

\paragraph{Limitations} 
As shown in Section \ref{sec:wm_level} and \ref{sec:mix}, the effectiveness of \method is limited when the watermark level is low or only a small portion of the training data is watermarked for the adversary. Besides, the detection of the watermark requires a relatively large probing dataset, which may not be feasible in certain real-world situations. Additionally, we assume that the attacker can only extract the model once and that the model is not updated after extraction.

\subsection*{Acknowledgments}
XZ was partially supported by UCSB Chancellor's Fellowship. YW was partially supported by NSF Award \#2048091.

\bibliography{paper}
\bibliographystyle{icml2023}

\newpage
\appendix
\onecolumn

\section{Appendix}
\label{sec:appendix}
\subsection{Watermarked examples}
\begin{table}[htbp]
\centering
\begin{tabular}{l}
\toprule
Example 1: \\
Unwatermarked: first of all, because the successes of the Marshall Plan have been overstated. \\
Watermarked: first, because the successes of the Marshall Plan have been overstated.\\
\midrule
Example 2: \\
Unwatermarked: because life is not about things \\
Watermarked: because life isn't about things \\
\midrule
Example 3: \\
Unwatermarked: i was at these meetings i was supposed to go to \\
Watermarked: i was at the meetings i was supposed to go to \\
\bottomrule
\end{tabular}
\caption{Watermarked examples}
\label{tab:wm_examples}
\end{table}

\subsection{Distribution property}\label{lemma1}
\begin{lemma}[Lemma 1 in \cite{zhao2022distillation}]
Assume $\mathbf{v} \sim \mathcal{U}(0,1),~ \mathbf{v}\in\mathbb{R}^n$ and $\mathbf{x} \sim \mathcal{N}(0,1),~ \mathbf{x}\in \mathbb{R}^n$, where $\mathbf{v}$ and $\mathbf{x}$ are both $i.i.d.$ and independent of each other. Then we have:
$$\frac{1}{\sqrt{n}}\mathbf{v}\cdot \mathbf{x} 	\rightsquigarrow \mathcal{N}\left(0, \frac{1}{3}\right),~ n\rightarrow \infty$$
\end{lemma}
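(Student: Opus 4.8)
The plan is to recognize $\frac{1}{\sqrt{n}}\mathbf{v}\cdot\mathbf{x}$ as a normalized sum of i.i.d.\ random variables and apply the classical central limit theorem. First I would set $Y_i := v_i x_i$ for $i = 1, \dots, n$, so that $\frac{1}{\sqrt{n}}\mathbf{v}\cdot\mathbf{x} = \frac{1}{\sqrt{n}}\sum_{i=1}^n Y_i$. Because the coordinates $v_i$ are i.i.d., the coordinates $x_i$ are i.i.d., and the two vectors are mutually independent, the products $Y_1, \dots, Y_n$ are themselves i.i.d.

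Next I would compute the first two moments of a single term. Using independence of $v_i$ and $x_i$,
\begin{equation*}
\mathbb{E}[Y_i] = \mathbb{E}[v_i]\,\mathbb{E}[x_i] = \tfrac{1}{2}\cdot 0 = 0,
\end{equation*}
and
\begin{equation*}
\mathrm{Var}(Y_i) = \mathbb{E}[Y_i^2] = \mathbb{E}[v_i^2]\,\mathbb{E}[x_i^2] = \Big(\int_0^1 t^2\,dt\Big)\cdot 1 = \tfrac{1}{3}.
\end{equation*}
Here I use $\mathbb{E}[v_i^2] = 1/3$ for a uniform variable on $[0,1)$ and $\mathbb{E}[x_i^2] = 1$ for a standard normal. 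In particular the common variance $\sigma^2 = 1/3$ is finite, which is the only regularity condition the CLT requires.

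Finally I would invoke the Lindeberg--L\'evy central limit theorem: for i.i.d.\ $Y_i$ with mean $0$ and finite variance $\sigma^2$, the normalized sum $\frac{1}{\sqrt{n}}\sum_{i=1}^n Y_i$ converges in distribution to $\mathcal{N}(0,\sigma^2)$ as $n \to \infty$. Substituting $\sigma^2 = 1/3$ yields $\frac{1}{\sqrt{n}}\mathbf{v}\cdot\mathbf{x} \rightsquigarrow \mathcal{N}(0, \tfrac{1}{3})$, as claimed.

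There is no real obstacle here: the statement is a direct instance of the classical CLT once the moment computation is done. The only points that genuinely need care are (i) verifying that the products $Y_i$ inherit the i.i.d.\ property from the joint independence assumptions, and (ii) confirming the finite-variance condition, both of which are immediate. If one wanted a fully self-contained argument avoiding an appeal to the named CLT, I would instead show that the characteristic function $\mathbb{E}[\exp(it\,n^{-1/2}\sum_i Y_i)] = \big(\mathbb{E}[\exp(it\,n^{-1/2}Y_1)]\big)^n$ converges pointwise to $\exp(-t^2/(2\cdot 3))$ via a second-order Taylor expansion of $\log \mathbb{E}[\exp(it\,n^{-1/2}Y_1)]$, but the CLT already packages exactly this computation.
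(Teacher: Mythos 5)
Your proposal is correct and follows essentially the same route as the paper's proof: define the i.i.d.\ products $u_i = \mathbf{v}_i\mathbf{x}_i$, compute $\mu = 0$ and $\sigma^2 = \mathbb{E}[\mathbf{v}_i^2]\mathbb{E}[\mathbf{x}_i^2] = 1/3$ using independence, and invoke the classical central limit theorem. Your version is in fact slightly more explicit about the individual moments ($\mathbb{E}[v_i] = 1/2$, $\mathbb{E}[v_i^2] = 1/3$) than the paper's, but the argument is identical.
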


\begin{proof}
Let $u_i=\mathbf{v}_i\mathbf{x}_i,~ i\in{1,2,\ldots,n}$. By assumption, $u_i$ are $i.i.d.$. Clearly, the first and second moments are bounded, so the claim follows from the classical central limit theorem, 
\begin{align*}
    \sqrt{n}\Bar{u}_n &= \frac{\sum_{i=1}^n u_i}{\sqrt{n}} \rightsquigarrow \mathcal{N}\left(\mu, \sigma^2\right)~\text{ as }~n\rightarrow \infty
\end{align*}
where
\begin{align*}
\mu &= \mathbb{E}\left(u_i\right) = \mathbb{E}\left(\mathbf{v}_i\mathbf{x}_i\right) = \mathbb{E}\left(\mathbf{v}_i\right) \mathbb{E}\left(\mathbf{x}_i\right) \\
&= 0 \\
\sigma^2 &= \operatorname{Var}(u_i) = \mathbb{E}\left(u_i^2\right) - \left(\mathbb{E}\left(u_i\right)\right)^2\\
&= \mathbb{E}\left(u_i^2\right) = \mathbb{E}\left(\mathbf{v}_i^2\mathbf{x}_i^2\right) = \mathbb{E}\left(\mathbf{v}_i^2\right)\mathbb{E}\left(\mathbf{x}_i^2\right)\\
&= \frac{1}{3}
\end{align*}
It follows that given large $n$
\begin{align*}
    \frac{1}{\sqrt{n}}\mathbf{v}\cdot \mathbf{x} \rightsquigarrow \mathcal{N}\left(0, \frac{1}{3}\right)
\end{align*}
\end{proof}

\subsection{Modified group probability properties}\label{lemma2}
As we discussed in Section 3.2, the watermarked distribution produced by our method remains a valid probability distribution. The following lemma formally establishes this result.
\begin{lemma}

    
Let $Q_{\mathcal{G}_1} \text{ and }{Q}_{\mathcal{G}_2}$ be the group probability in probability vector $\mathbf{p}$, then the modified group probability, as defined in Equation \ref{eq:mod1}, \ref{eq:mod2} satisfies $0 \leq \tilde{Q}_{\mathcal{G}_1}, \tilde{Q}_{\mathcal{G}_2}\leq 1$ and $\tilde{Q}_{\mathcal{G}_1} + \tilde{Q}_{\mathcal{G}_2} = 1$.
\begin{proof}
Notice that $Q_{\mathcal{G}_1} \text{ and }{Q}_{\mathcal{G}_2}$ are the summation of the token probabilities in each group, we have 
\begin{align*}
    0 \leq Q_{\mathcal{G}_1}, Q_{\mathcal{G}_2} \leq 1\text{ and } Q_{\mathcal{G}_1} +{Q}_{\mathcal{G}_2} = 1.
\end{align*}
Given $z_{1}(\boldsymbol{x}) = \cos \left(f_{w} g(\boldsymbol{x}, \mathbf{v}, \mathbf{M})\right) \text{ and } 
       z_{2}(\boldsymbol{x}) = \cos \left(f_{w} g(\boldsymbol{x}, \mathbf{v}, \mathbf{M})+ \pi \right) $, we have 
\begin{align*}
    0 \leq z_1(\boldsymbol{x}), z_2(\boldsymbol{x}) \leq 1 \text{ and } z_1(\boldsymbol{x}) + z_2(\boldsymbol{x}) = 0
\end{align*}

Then we can get
\begin{align*}
0 \leq Q_{\mathcal{G}_1}+\varepsilon\left(1+z_{1}(\boldsymbol{x})\right) \leq 1+2\varepsilon \\
0 \leq Q_{\mathcal{G}_2}+\varepsilon\left(1+z_{2}(\boldsymbol{x})\right) \leq 1+2\varepsilon
\end{align*}
Therefore,
\begin{align*}
0 \leq \frac{Q_{\mathcal{G}_1}+\varepsilon\left(1+z_{1}(\boldsymbol{x})\right)}{1+2 \varepsilon} \leq 1 \\
0 \leq \frac{Q_{\mathcal{G}_2}+\varepsilon\left(1+z_{2}(\boldsymbol{x})\right)}{1+2 \varepsilon} \leq 1
\end{align*}

\begin{align*}
    \frac{Q_{\mathcal{G}_1}+\varepsilon\left(1+z_{1}(\boldsymbol{x})\right)}{1+2 \varepsilon} + \frac{Q_{\mathcal{G}_2}+\varepsilon\left(1+z_{2}(\boldsymbol{x})\right)}{1+2 \varepsilon} = \frac{(Q_{\mathcal{G}_1} + Q_{\mathcal{G}_2})+2\varepsilon + \varepsilon\left(z_{1}(\boldsymbol{x}) + z_{2}(\boldsymbol{x})\right)}{1+2 \varepsilon} = \frac{1 + 2 \varepsilon}{1+2 \varepsilon} = 1
\end{align*}

\end{proof}
\end{lemma}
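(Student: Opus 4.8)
The plan is to reduce the statement to two elementary facts about the ingredients of Equations \ref{eq:mod1} and \ref{eq:mod2}: the unperturbed group masses and the two phase-shifted cosine signals. First I would record that, because $\mathcal{G}_1$ and $\mathcal{G}_2$ partition the vocabulary $\mathcal{V}$ and $\mathbf{p}\in[0,1]^{|\mathcal{V}|}$ is a genuine probability vector, the group sums satisfy $0\le Q_{\mathcal{G}_1},Q_{\mathcal{G}_2}\le 1$ and $Q_{\mathcal{G}_1}+Q_{\mathcal{G}_2}=1$; the first bound holds termwise and the identity is just the statement that the entries of $\mathbf{p}$ sum to one.

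Second, I would extract the two properties of the signal that drive everything. Since $z_{1}(\boldsymbol{x})$ and $z_{2}(\boldsymbol{x})$ are cosines, they lie in $[-1,1]$, so $1+z_{i}(\boldsymbol{x})\in[0,2]$ for $i\in\{1,2\}$; and because the two arguments differ by a phase of $\pi$, the identity $\cos(\theta+\pi)=-\cos\theta$ gives $z_{1}(\boldsymbol{x})+z_{2}(\boldsymbol{x})=0$, as already noted in the text.

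With these in hand the bounds are immediate. For nonnegativity, the numerator $Q_{\mathcal{G}_1}+\varepsilon(1+z_{1}(\boldsymbol{x}))$ is a sum of nonnegative terms (using $Q_{\mathcal{G}_1}\ge 0$, $\varepsilon\ge 0$, and $1+z_{1}\ge 0$), while the denominator $1+2\varepsilon>0$, so $\tilde{Q}_{\mathcal{G}_1}\ge 0$; the same argument applies to $\tilde{Q}_{\mathcal{G}_2}$. For the upper bound I would estimate the numerator from above by $Q_{\mathcal{G}_1}+2\varepsilon\le 1+2\varepsilon$, using $Q_{\mathcal{G}_1}\le 1$ and $1+z_{1}\le 2$, which yields $\tilde{Q}_{\mathcal{G}_1}\le 1$, and symmetrically for $\tilde{Q}_{\mathcal{G}_2}$.

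Finally, for the normalization I would add the two fractions over the common denominator $1+2\varepsilon$; the numerator collapses to $(Q_{\mathcal{G}_1}+Q_{\mathcal{G}_2})+2\varepsilon+\varepsilon(z_{1}(\boldsymbol{x})+z_{2}(\boldsymbol{x}))=1+2\varepsilon+0$, so the ratio equals $1$. There is no genuine obstacle here, as the claim is a direct algebraic verification, but the one point that warrants care is the range of the cosine: what the argument actually needs is $-1\le z_{i}\le 1$ (hence $0\le 1+z_{i}\le 2$), so I would be careful to invoke exactly this bound rather than accidentally understating the range of $z_{i}$.
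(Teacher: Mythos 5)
Your proof is correct and follows essentially the same route as the paper's: bound the group sums, bound the cosine signals, deduce the termwise bounds on the numerators, and sum the two fractions over the common denominator $1+2\varepsilon$. In fact you are slightly more careful than the paper, which misstates the cosine range as $0 \leq z_1(\boldsymbol{x}), z_2(\boldsymbol{x}) \leq 1$; your observation that the argument really uses $-1 \le z_i \le 1$ (hence $0 \le 1+z_i \le 2$) is the correct reading and repairs that slip.
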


\end{document}